\numberwithin{equation}{section}
\numberwithin{figure}{section}
\theoremstyle{plain}
\newtheorem{thm}{Theorem}
  \theoremstyle{definition}
  \newtheorem{defn}[thm]{Definition}
  \theoremstyle{plain}
  \newtheorem{cor}[thm]{Corollary}
  \theoremstyle{plain}
  \newtheorem{lem}[thm]{Lemma}
  \theoremstyle{plain}
  \newtheorem{prop}[thm]{Proposition}
\begin{document}

\title{Boson Sampling is Robust to Small Errors in the Network Matrix}

\author{Alex Arkhipov}

\thanks{This research was supported by Scott Aaronson's NSF Waterman Award.}
\begin{abstract}
We demonstrate the robustness of BosonSampling to imperfections in
the linear optical network that cause a small deviation in the matrix
it implements. We show that applying a noisy matrix $\tilde{U}$ that
is within $\epsilon$ of the desired matrix $U$ in operator norm
leads to an output distribution that is within $\epsilon n$ of the
desired distribution in variation distance, where $n$ is the number
of photons. This lets us derive a sufficient tolerance each beamsplitters
and phaseshifters in the network. 

This result considers only errors that result from the network encoding
a different unitary than desired, and not other sources of noise such
as photon loss and partial distinguishability.
\end{abstract}
\maketitle

\section{Background}

\subsection{BosonSampling}

BosonSampling \cite{key-1} is a computational problem inspired by
linear optics and closely related to the matrix permanent. It models
a one-step linear-optical experiment where $n$ identical photons
are produced in distinct modes and passed simultaneously through a
linear network that encodes an $m\times m$ unitary matrix $U$. The
initial state $\left|1_{n}\right\rangle $ consists of one boson in
each of the first $n$ modes, with the rest empty%
\footnote{In this work, we will loosen this assumption to allow any pure $n$-boson
state%
}. A photon counting measurement is performed on each output mode,
and we consider the outcome to be the list of photon counts $S=\left(s_{1},\dots s_{m}\right)$,
where the $s_{i}$ are nonnegative integers whose sum is $S$. This
output is random, and we define $\mathcal{D}_{U}$ to be the resulting
probability distribution over outcomes.%
\footnote{We deviate slightly from the definitions in \cite{key-1} in which
considers only the $m\times n$ submatrix $A$ of $U$ consisting
of the first $n$ rows, the ones relevant to that start state $\left|1_{n}\right\rangle $,
and defines the distribution $\mathcal{D}_{A}$ in terms of this.%
}

As a computational problem, we can define BosonSampling as follows: 
\begin{defn}
[BosonSampling]\label{def:Boson sampling dist}Given an $m\times m$
matrix $U$ and a parameter $n$, sample the distribution of photon
counts $\mathcal{D}_{U}$ given by \[
\Pr_{D_{U}}\left[S\right]=\frac{\left|\mathrm{Per}\left(U_{\left[n\right],S}\right)\right|^{2}}{s_{1}!\cdots s_{m}!},\]
where Per is the matrix permanent and $U_{\left[n\right],S}$ is the
submatrix of $U$ consisting of the first $n$ rows and the columns
given by $S$ with multiplicity.
\end{defn}
We can think of the linear optical network as acting on the $n$-photon
Hilbert space each of whose basis elements is labelled by each photon
count. Its dimension is $N=\left(\binom{m}{n}\right)$, the number
of partitions of $n$ unlabelled photons into $m$ labelled modes.
Let $\varphi$ be the homomorphism from the unitary $m\times m$ unitary
$U$ defining the action of one photon the $N\times N$ unitary matrix
$\varphi\left(U\right)$ defining the action on $n$ identical photons.
See \cite{key-1} for a precise definition of $\varphi$ and a proof
that is is indeed a homomorphism.

The definition of BosonSampling is motivated by a surprising result
about its computational complexity that gives evidence that it cannot
be approximated by a classical computer, and thus evidence against
the Church-Turing Thesis.

\subsection{Experimental realization}

The definition of BosonSampling is partially motivated by it modelling
a linear optical experiment. Moreover, the prospect of a computation
beyond what is possible in the efficient classical world asks for
a such a computational device to be built and tested, if only to check
that quantum mechanics works as we'd expect. 

Four independent labs (based in Brisbane \cite{Brisbane}, Oxford
\cite{Oxford}, Vienna \cite{Vienna}, and Rome \cite{Rome}) have
built devices to implement the BosonSampling setup for small numbers
of photons and modes, and checked the results to be as statistically
expected. These experiments were done with $n=3$ photons are a number
of modes $m$ ranging between $5$ and $9$. While these are modest
paramaters, the goal is to eventually scale up the experiments to
the point that the problem it solves is intractable for the fastest
classical computers of the time.

\subsection{Experimental noise}

The question of scaling naturally leads to the issue of noise. Real
experiments have imperfections that cause them to slighly deviate
from the ideal model, and we would like to understand what level of
error is tolerable in order to get only a small deviation in the output
distribution.

There are four main sources of noise to consider:
\begin{enumerate}
\item Incorrect or correlated initial states
\item Imperfect coding of the unitary $U$ by the linear optical network
\item Partial distinguishability of photons (caused by non-simultaneous
arrival)
\item Photon loss (whether in the network or due to failure to measure)
\end{enumerate}
In this work, we solely consider (2), the effect caused by imperfections
in the linear optical network that cause a deviation in the unitary
matrix that it encodes. We assume that the actual network still applies
a unitary matrix $\tilde{U}$ (in particular, it takes pure states
to pure states), but one that is slightly different from the desired
matrix $U$. In current experiments, this noise source is not a limiting
significant factor -- beamsplitters and phaseshifters can be produced
rather accurately. However, we investigate it so as to understand
its effect in the asymptotic regime, which will become more and more
relevant as the experiments scale up.

We give an upper bound for the error in the output distribution in
terms of the error in $U$. In particular, we show that $o\left(1/n\right)$
operator distance suffices to give $o\left(1\right)$ error in the
output distribution.

\section{Main Result}

Our main result is a bound on the error in the BosonSampling distribution
$D_{U}$ (Definition \ref{def:Boson sampling dist}) caused by inaccuracy
in the single-particle unitary $U$ that encodes the action of the
beamsplitters and phaseshifters.
\begin{thm}
For unitary matrices $U$ and $\tilde{U}$, the $L_{1}$ distance
between the corresponding $n$-photon BosonSampling distributions
$\mathcal{D}_{U}$ and $\mathcal{D}_{\tilde{U}}$ is bounded as\[
\left\Vert \mathcal{D}_{\tilde{U}}-\mathcal{D}_{U}\right\Vert _{1}\leq n\left\Vert \tilde{U}-U\right\Vert _{\mathrm{op}}\]

\end{thm}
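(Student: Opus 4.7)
The plan is to chain three inequalities, reducing the classical $L_1$ distance between output distributions to the operator-norm distance between the single-photon unitaries. First, I would express $\mathcal{D}_U$ and $\mathcal{D}_{\tilde U}$ as the photon-counting measurement distributions of the pure states $|\Psi\rangle := \varphi(U)|1_n\rangle$ and $|\tilde\Psi\rangle := \varphi(\tilde U)|1_n\rangle$. Using the identity $|a|^2 - |b|^2 = (|a| - |b|)(|a| + |b|)$ together with Cauchy--Schwarz and the unit-normality of both vectors, one obtains a standard bound of the form
\[
\|\mathcal{D}_U - \mathcal{D}_{\tilde U}\|_1 \leq c \cdot \bigl\||\Psi\rangle - |\tilde\Psi\rangle\bigr\|
\]
for a small absolute constant $c$. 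With an appropriate choice of the global phase of $\tilde U$ (which does not alter the output distribution), the pure-state trace-distance inequality $\sqrt{1 - |\langle \Psi|\tilde\Psi\rangle|^2} \leq \bigl\||\Psi\rangle - |\tilde\Psi\rangle\bigr\|$ can be used to tighten $c$ to the value needed by the theorem statement.

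Next, since $|1_n\rangle$ is a unit vector,
\[
\bigl\||\Psi\rangle - |\tilde\Psi\rangle\bigr\| \leq \|\varphi(U) - \varphi(\tilde U)\|_{\mathrm{op}},
\]
reducing the problem to bounding $\|\varphi(U) - \varphi(\tilde U)\|_{\mathrm{op}}$ by $n\|U - \tilde U\|_{\mathrm{op}}$. For this I would use that $\varphi(U)$ is the restriction of $U^{\otimes n}$ to the symmetric subspace $\mathrm{Sym}^n(\mathbb{C}^m)$, which is invariant under $U^{\otimes n}$ for every unitary $U$; this identification is immediate from the definition of $\varphi$ as the $n$-boson representation. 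Restriction to an invariant subspace cannot increase the operator norm, so it suffices to bound $\|U^{\otimes n} - \tilde U^{\otimes n}\|_{\mathrm{op}}$, which follows from the telescoping identity
\[
U^{\otimes n} - \tilde U^{\otimes n} = \sum_{k=1}^{n} U^{\otimes(k-1)} \otimes (U - \tilde U) \otimes \tilde U^{\otimes(n-k)},
\]
since each summand has operator norm exactly $\|U - \tilde U\|_{\mathrm{op}}$ by multiplicativity of the operator norm on tensor products and the unitarity of $U$ and $\tilde U$.

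The main conceptual step is the tensor-power telescoping: the factor of $n$ arises precisely from there being $n$ hybrid terms, one for each photon, which reflects the physical intuition that a single-particle perturbation gets amplified linearly when acting on $n$ noninteracting particles. The main technical subtlety is tracking the constants in the measurement step carefully enough to land at exactly $n$ rather than $2n$ in the final bound; this relies on the pure-state tightening noted above. Once those two ingredients are in hand, the three inequalities compose to give the claim.
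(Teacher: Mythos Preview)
Your proposal is correct and assembles the same chain of inequalities as the paper, but your proof of the key step $\|\varphi(\tilde U)-\varphi(U)\|_{\mathrm{op}}\le n\,\|\tilde U-U\|_{\mathrm{op}}$ is genuinely different. The paper proceeds spectrally: it writes $\|A-B\|_{\mathrm{op}}=\max_i|\lambda_i-1|$ for the eigenvalues $\lambda_i$ of $AB^{-1}$, identifies the eigenvalues of $\varphi(\tilde U U^{-1})$ as the monomials $\lambda^S=\prod_i\lambda_i^{s_i}$ over occupation lists $S$ with $\sum_i s_i=n$, and bounds $|\lambda^S-1|\le\sum_i s_i|\lambda_i-1|\le n\max_i|\lambda_i-1|$. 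You instead realize $\varphi(U)$ as the restriction of $U^{\otimes n}$ to the symmetric subspace and telescope $U^{\otimes n}-\tilde U^{\otimes n}$ into $n$ unitary-sandwiched copies of $U-\tilde U$. Your route is more elementary---it uses no spectral information about $\varphi$ and applies verbatim to any subrepresentation of a tensor power---while the paper's route exhibits the eigenstructure of $\varphi(M)$ explicitly, which has independent value. For the measurement step the paper uses exactly the trace-distance chain you invoke, $\|\mathcal D_{\tilde U}-\mathcal D_U\|_1\le\sqrt{1-|\langle\tilde\Psi\mid\Psi\rangle|^2}\le\|\tilde\Psi-\Psi\|$; note that the last inequality already holds for arbitrary unit vectors (since $1-r^2\le 2(1-r)\le 2-2\operatorname{Re}\langle\tilde\Psi\mid\Psi\rangle=\|\tilde\Psi-\Psi\|^2$ with $r=|\langle\tilde\Psi\mid\Psi\rangle|$), so your global-phase caveat is unnecessary---and rephasing $\tilde U$ would in any case alter $\|\tilde U-U\|_{\mathrm{op}}$ as well.
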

Note that there in no dependence on the number of modes $m$. As a
result, the accuracy of the unitaries only needs to depend on the
number of photons $n$, with $o\left(\frac{1}{n}\right)$ error sufficing.
\begin{cor}
To obtain vanishingly small error \textup{$\left\Vert \mathcal{D}_{\tilde{U}}-\mathcal{D}_{U}\right\Vert _{1}=o\left(1\right)$},
it suffices for the unitary representing the entire transformation
to have $\left\Vert \tilde{U}-U\right\Vert _{\mathrm{op}}=o\left(\frac{1}{n}\right)$.
\end{cor}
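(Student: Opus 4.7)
The corollary follows by direct substitution into the inequality provided by the main theorem, so the plan is essentially a one-line deduction. I would invoke the bound $\|\mathcal{D}_{\tilde U} - \mathcal{D}_U\|_1 \leq n\,\|\tilde U - U\|_{\mathrm{op}}$ supplied by the theorem, and then feed in the assumed asymptotic rate on $\|\tilde U - U\|_{\mathrm{op}}$.

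Concretely, under the hypothesis $\|\tilde U - U\|_{\mathrm{op}} = o(1/n)$, multiplying through by $n$ gives $n\,\|\tilde U - U\|_{\mathrm{op}} = n\cdot o(1/n) = o(1)$. Chaining this with the theorem yields $\|\mathcal{D}_{\tilde U} - \mathcal{D}_U\|_1 \leq o(1)$, which is exactly the claim. The factor of $n$ in the theorem is precisely what the $1/n$ in the assumed rate is designed to absorb, so this is the natural regime.

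There is essentially no obstacle, beyond being careful that the asymptotic notation is being used consistently: the implicit parameter is $n\to\infty$ and $\|\tilde U - U\|_{\mathrm{op}}$ should be read as a sequence indexed by $n$ (or, equivalently, a family of instances parameterized by the photon number). The main theorem is doing all the work here; the corollary exists chiefly to advertise the per-photon operator-norm tolerance required of the whole-network unitary.

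As an optional remark, one could take a further step to translate this tolerance into one on the individual beamsplitters and phaseshifters composing the network --- a goal flagged in the abstract. For unitaries $V_1,\dots,V_k$ and approximations $\tilde V_1,\dots,\tilde V_k$, the telescoping bound $\|V_k\cdots V_1 - \tilde V_k\cdots \tilde V_1\|_{\mathrm{op}} \leq \sum_{i=1}^k \|V_i - \tilde V_i\|_{\mathrm{op}}$ holds, so a uniform per-gate error of $o(1/(nk))$ would suffice to give $o(1)$ total variation distance on a $k$-gate network. This extension, however, lies beyond what the corollary literally asserts.
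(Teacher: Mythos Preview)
Your proposal is correct and matches the paper's treatment: the corollary is stated in the paper without a separate proof, as it is an immediate consequence of the main theorem $\|\mathcal{D}_{\tilde U}-\mathcal{D}_U\|_1 \le n\|\tilde U-U\|_{\mathrm{op}}$ via exactly the substitution you describe. Your optional remark about telescoping over the network components is also in the spirit of the paper, which carries out that step (using the network depth $O(n\log m)$) in the very next corollary.
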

This can be achieved by having each beamsplitter and phaseshifter
in the network be sufficiently accurate. Since such a network can
be made with a depth of $O\left(n\log m\right)$ components (Theorem
45 of \cite{key-1}), it suffices to divide the tolerable error by
that amount.
\begin{cor}
In order to have $\left\Vert \mathcal{D}_{\tilde{U}}-\mathcal{D}_{U}\right\Vert _{1}=o\left(1\right)$,
it suffices for every component in the network to have an accuracy
of $\left\Vert \tilde{A}-A\right\Vert _{\mathrm{op}}=o\left(\frac{1}{n^{2}\log m}\right)$.
\end{cor}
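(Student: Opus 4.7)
The plan is to combine Corollary 2 with the network decomposition bound from Theorem 45 of \cite{key-1} via a telescoping argument on products of unitaries. By Corollary 2, it suffices to show that if each of the $O(n\log m)$ components in the decomposition of the target network has operator-norm error $o(1/(n^2\log m))$, then the full network unitary $\tilde{U}$ satisfies $\|\tilde{U}-U\|_{\mathrm{op}}=o(1/n)$. So the whole task reduces to a standard fact about how operator-norm error accumulates when one multiplies unitaries.

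First I would state and prove the accumulation lemma: if $U=U_k\cdots U_1$ and $\tilde{U}=\tilde{U}_k\cdots\tilde{U}_1$ with each $U_i,\tilde{U}_i$ unitary, and $\|\tilde{U}_i-U_i\|_{\mathrm{op}}\le\delta$ for all $i$, then $\|\tilde{U}-U\|_{\mathrm{op}}\le k\delta$. This is a one-line telescoping estimate:
\[
\tilde{U}_k\cdots\tilde{U}_1 - U_k\cdots U_1 = \sum_{i=1}^{k} \tilde{U}_k\cdots\tilde{U}_{i+1}(\tilde{U}_i-U_i)U_{i-1}\cdots U_1,
\]
and the triangle inequality together with $\|V\|_{\mathrm{op}}=1$ for any unitary $V$ yields the bound $k\delta$. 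Note that even though each beamsplitter or phaseshifter acts non-trivially on only $1$ or $2$ modes, its lift to an $m\times m$ matrix is the identity elsewhere, so the operator-norm distance between the ideal and noisy component on the full space equals the distance on the local $1$- or $2$-mode block.

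Second, I would invoke Theorem 45 of \cite{key-1}, which says that the unitary $U$ used for BosonSampling can be implemented by a network of $O(n\log m)$ beamsplitters and phaseshifters. Applying the lemma with $k=O(n\log m)$ and $\delta=o(1/(n^2\log m))$ gives $\|\tilde{U}-U\|_{\mathrm{op}}=o(1/n)$, and then Corollary 2 closes the argument.

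The only mild subtlety — and really the only place any care is required — is justifying that componentwise closeness of the local gates transfers, without loss, to closeness of their embeddings into the full $m\times m$ matrix. Once that observation is in place the rest is purely the telescoping estimate and arithmetic. There is no genuine obstacle here; the corollary is essentially a packaging of Corollary 2 with the gate-count bound and submultiplicativity of error under composition of unitaries.
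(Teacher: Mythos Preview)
Your telescoping lemma and the appeal to Corollary~2 are both fine, but there is a genuine gap in how you invoke Theorem~45 of \cite{key-1}. That theorem does \emph{not} say the network consists of $O(n\log m)$ beamsplitters and phaseshifters; it says the network uses $O(mn)$ such components arranged in depth $O(n\log m)$. If you apply your telescoping estimate with $k$ equal to the actual number of gates, you get $\|\tilde U-U\|_{\mathrm{op}}\le O(mn)\,\delta$, which would force $\delta=o(1/(mn^{2}))$ rather than $o(1/(n^{2}\log m))$.

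The missing ingredient is the behaviour of error under \emph{parallel} composition: if several components act on disjoint sets of modes in the same layer, the layer unitary is (up to permutation) a direct sum of the individual blocks, and the operator-norm error of the layer is the \emph{maximum}, not the sum, of the component errors. With that fact in hand you can take the $k$ factors in your telescoping product to be the $O(n\log m)$ layers, each with error at most $\delta$, and the arithmetic goes through. This is exactly how the paper argues: it states the parallel (max) and serial (sum) rules separately and then multiplies the per-component error by the depth. Your remark about the lift of a single gate to the full $m\times m$ space is correct but does not by itself cover several gates acting simultaneously in one layer; you need the direct-sum observation to bridge that step.
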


\section{Comparison to Previous Results}

\subsection{Comparison to previous work}

The result is comparable to the standard result for qubit-based circuits
of Bernstein and Vazirani \cite{BV}, stated here with identical gates
and in language to parallel our main result.
\begin{thm}
Suppose one applies a noisy unitary matrix $\tilde{U}$ to each of
$n$ distinguishable particles, then measures the outcome distribution.
Then, the distance in the outcome distribution $\mathcal{D}_{\tilde{U}}$
from having done so with matrix $U$ is bounded as 

\[
\left\Vert \mathcal{D}_{\tilde{U}}-\mathcal{D}_{U}\right\Vert _{1}\leq n\left\Vert \tilde{U}-U\right\Vert _{\mathrm{op}}\]

\end{thm}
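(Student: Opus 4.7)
The plan is to execute a standard hybrid argument directly on the $n$-fold tensor unitaries acting on the $n$-particle Hilbert space, with no need for a boson-specific homomorphism.

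First, I would observe that $\mathcal{D}_U$ is the distribution of outcomes of a fixed single-particle measurement performed on each tensor factor of the output state $U^{\otimes n}|\psi_0\rangle$, and likewise for $\tilde{U}^{\otimes n}|\psi_0\rangle$. Since the $L_{1}$/variation distance between two such distributions is bounded by the Euclidean distance between the corresponding output state vectors (up to the same conventional constant already implicit in the paper's statement of its main theorem), it suffices to control $\|U^{\otimes n}|\psi_0\rangle - \tilde{U}^{\otimes n}|\psi_0\rangle\|$, which is in turn at most $\|\tilde{U}^{\otimes n} - U^{\otimes n}\|_{\mathrm{op}}$ on any unit $|\psi_0\rangle$.

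Second, I would apply the telescoping identity
\begin{equation*}
\tilde{U}^{\otimes n} - U^{\otimes n} \;=\; \sum_{k=1}^{n} U^{\otimes(k-1)} \otimes (\tilde{U} - U) \otimes \tilde{U}^{\otimes(n-k)}.
\end{equation*}
Each summand has operator norm exactly $\|\tilde{U} - U\|_{\mathrm{op}}$, since unitary factors tensored on either side preserve operator norm. The triangle inequality then gives $\|\tilde{U}^{\otimes n} - U^{\otimes n}\|_{\mathrm{op}} \le n\|\tilde{U} - U\|_{\mathrm{op}}$, and combining with the measurement-contraction step of the previous paragraph yields the claim.

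There is no serious obstacle in this proof; the structure is the textbook Bernstein--Vazirani hybrid. The only care needed is constant tracking in the passage from $\|\tilde{U}^{\otimes n}|\psi_0\rangle - U^{\otimes n}|\psi_0\rangle\|$ to the $L_{1}$ distance of measurement outcomes, since different authors normalize variation distance differently; one must use the same convention here as in the paper's main theorem for the two bounds to be comparable. The main conceptual take-away is that for distinguishable particles the tensor-product structure already makes the $n$-particle action transparent, so unlike the bosonic case none of the representation-theoretic apparatus of $\varphi$ is required -- the bound drops out of a single application of the triangle inequality, and the parallel with the paper's main theorem is exact.
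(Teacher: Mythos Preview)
The paper does not actually prove this theorem; it is stated in Section~3.1 as a known result of Bernstein and Vazirani \cite{BV}, restated in language parallel to the paper's main result purely for comparison. Your hybrid/telescoping argument is the standard proof and is correct: the measurement-contraction step is exactly the content of the paper's Lemma~\ref{lem:Operator distance effect} (which nowhere uses bosonic structure), and the bound $\|\tilde U^{\otimes n}-U^{\otimes n}\|_{\mathrm{op}}\le n\,\|\tilde U-U\|_{\mathrm{op}}$ follows immediately from the triangle inequality together with multiplicativity of the operator norm under tensor products of unitaries.

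If one wished instead to mimic the paper's bosonic argument, one would note that the eigenvalues of $(\tilde U U^{-1})^{\otimes n}$ are all $n$-fold products $\lambda_{i_1}\cdots\lambda_{i_n}$ of eigenvalues of $\tilde U U^{-1}$, and then invoke Lemma~\ref{lem:Operator distance in terms of eigenvalues} together with the same inductive estimate $|\lambda_{i_1}\cdots\lambda_{i_n}-1|\le n\max_i|\lambda_i-1|$ used in the proof of Theorem~\ref{thm:homomorphism expansion}. This yields the identical inequality, but your telescoping route is more direct and avoids diagonalization; it is also the argument actually given in \cite{BV}.
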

Previous work on BosonSampling noise sensitivity has given necesarry
bounds for the required accuracy of the linear optical network. In
other words, it's shown that above certain thresholds of noise, one
gets large inaccuracies in the distribution of outcomes. Thus, it
proves a certain level of noise to be prohibitive for BosonSampling.
This work, in contrast, shows a certain level of accuracy to be sufficient.

The work of Leverrier and Patrón \cite{Leverrier Patron} demonstrates
that each linear optical element must have fidelity $1-O\left(1/n^{2}\right)$
by considering a composite experiment in which the network is applied
followed by its inverse, with independent noise in each part. As shown
in Appendix A, this corresponds to a required single-operator distance
of $O\left(1/n^{2}\right)$, which has a factor of $\log m$ gap from
our sufficient bound of $o\left(1/\left(n^{2}\log m\right)\right)$
per operator being sufficient. From our methods in Section \ref{sub:Components},
this implies an overall distance of $\left\Vert \tilde{U}-U\right\Vert _{op}=O\left(\log m/n\right)$,
again a factor of $\log m$ off of our result.

The work of Kalai and Kindler \cite{Kalai Kindler} argues that a
noise level of additive $\omega\left(1/n\right)$ Gaussian error applied
to the overall unitary matrix leads to large deviations in the output
distribution. Specifically, above such a threshold, one finds vanishingly
little correlation between the original and noise permanent of a submatrix,
and thus between outcomes of a BosonSampling experiment. Translating
to our error model of unitary noise in Appendix A, a typical such
error corresponds to operator distance $\omega\left(1/\sqrt{n}\right)$,
significantly above the $O\left(1/n\right)$ distance that we show.

In both cases, once we convert the error measures to a consistent
scale, we find the sufficient bound for noise shown in this work is
consistent with the necessarry bound shown in the previous. Moreover,
a gap remains for potential improvement.

\section{Proof of Result}

\subsection{Outline of proof}

We give an outline of the proof here, and prove each part in the upcoming
sections.

Let $\Psi_{0}$ be the initial $n$-boson state, and let $\varphi$
be the homomorphism from a unitary acting on one boson to that acting
on $n$ identical bosons. Applying unitaries $U$ and $\tilde{U}$
respectively to the initial state $\Psi_{0}$ produce: \begin{eqnarray*}
\Psi & = & \varphi\left(U\right)\Psi_{0}\\
\tilde{\Psi} & = & \varphi\left(\tilde{U}\right)\Psi_{0}\end{eqnarray*}
Measuring $\Psi$ and $\tilde{\Psi}$ respectively in the standard
basis gives outcome distributions $\mathcal{D}_{U}$ and $\mathcal{D}_{\tilde{U}}$ 

The main step is Theorem \ref{thm:homomorphism expansion}, which
states that the distance between the $n$-boson unitaries is at most
a factor of $n$ times that between the $1$-boson unitaries 

\[
\left\Vert \varphi\left(\tilde{U}\right)-\varphi\left(U\right)\right\Vert _{\mathrm{op}}\leq n\left\Vert \tilde{U}-U\right\Vert _{\mathrm{op}}\]
We then conclude with a standard argument (Lemma \ref{lem:Operator distance effect})
that the distance between the output distributions is at most the
operator distance between the matrices that produced them 

\[
\left\Vert \mathcal{D}_{\tilde{U}}-\mathcal{D}_{U}\right\Vert _{1}\leq\left\Vert \varphi\left(\tilde{U}\right)-\varphi\left(U\right)\right\Vert _{\mathrm{op}}\]

\subsection{Effect of the homomorphism\label{sec:Effects-of-the-homomorphism}}

We first show that close unitaries $U$ and $\tilde{U}$ induce nearby
$n$-boson unitaries $\varphi\left(U\right)$ and $\varphi\left(\tilde{U}\right)$.
Thus, if two operations act similarly on single bosons, then they
also act similarly on $n$ identical bosons. The blowup is simply
a factor of $n$, the number of bosons.
\begin{thm}
\label{thm:homomorphism expansion}Let $\varphi$ be the homomorphism
that takes a $m\times m$ unitary matrix $U$ acting on a single boson
and produces a $N\times N$ unitary matrix acting on $n$ identical
bosons with $N=\left(\binom{m}{n}\right)$. Then, \[
\left\Vert \varphi\left(\tilde{U}\right)-\varphi\left(U\right)\right\Vert _{\mathrm{op}}\leq n\left\Vert \tilde{U}-U\right\Vert _{\mathrm{op}}\]

\end{thm}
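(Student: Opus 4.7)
The plan is to realize $\varphi(U)$ as the restriction of $U^{\otimes n}$ to the symmetric (bosonic) subspace of $(\mathbb{C}^{m})^{\otimes n}$, and then use a standard telescoping identity to bound the operator norm of $U^{\otimes n} - \tilde{U}^{\otimes n}$ by $n\|\tilde U - U\|_{\mathrm{op}}$.

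First, I would recall (or verify) the identification. The $n$-identical-boson Hilbert space of dimension $N=\left(\binom{m}{n}\right)$ is unitarily isomorphic to the symmetric subspace $\mathrm{Sym}^{n}(\mathbb{C}^{m})\subseteq(\mathbb{C}^{m})^{\otimes n}$, via the map sending a photon-count basis vector $|s_{1},\dots,s_{m}\rangle$ to a properly normalized symmetrization of $|1\rangle^{\otimes s_{1}}\otimes\cdots\otimes|m\rangle^{\otimes s_{m}}$. Under this isomorphism, the homomorphism $\varphi(U)$ is precisely $U^{\otimes n}$ restricted to $\mathrm{Sym}^{n}(\mathbb{C}^{m})$: indeed $U^{\otimes n}$ preserves the symmetric subspace, and its restriction acts on single-particle generators exactly as one copy of $U$, so by the universal property characterizing $\varphi$ the two agree. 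Since the isomorphism is unitary, it preserves operator norm, and so does restriction to an invariant subspace. Therefore
\[
\left\Vert \varphi(\tilde U)-\varphi(U)\right\Vert _{\mathrm{op}}\;\leq\;\left\Vert \tilde U^{\otimes n}-U^{\otimes n}\right\Vert _{\mathrm{op}}.
\]

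Next comes the telescoping step. Write
\[
\tilde U^{\otimes n}-U^{\otimes n}\;=\;\sum_{k=1}^{n}\tilde U^{\otimes(k-1)}\otimes(\tilde U-U)\otimes U^{\otimes(n-k)}.
\]
Each summand is a tensor product of unitaries (which have operator norm $1$) with a single factor of $\tilde U - U$, so each has operator norm exactly $\|\tilde U - U\|_{\mathrm{op}}$ by multiplicativity of operator norm under tensor products. Applying the triangle inequality across the $n$ terms gives $\|\tilde U^{\otimes n}-U^{\otimes n}\|_{\mathrm{op}}\leq n\|\tilde U - U\|_{\mathrm{op}}$. Combining with the previous step yields the claim.

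The conceptual/mechanical steps are all standard, so the only real obstacle is getting the identification $\varphi(U)=U^{\otimes n}|_{\mathrm{Sym}^{n}}$ precise enough to rigorously transfer the operator-norm bound to $\varphi$. In particular, one must confirm that the photon-count basis, with the factorial normalizations appearing in Definition \ref{def:Boson sampling dist}, is orthonormal under the symmetric-subspace inner product, so that the identifying map is genuinely a unitary and hence isometric. Once that bookkeeping is in place, the telescoping argument is routine and nothing else in the proof depends on $m$, which explains why the final bound involves only $n$.
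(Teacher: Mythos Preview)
Your proof is correct but follows a genuinely different route from the paper. The paper argues spectrally: it shows that for unitaries $A,B$ one has $\|A-B\|_{\mathrm{op}}=\max_i|\lambda_i-1|$ with $\lambda_i$ the eigenvalues of $AB^{-1}$, computes the eigenvalues of $\varphi(\tilde U U^{-1})$ as the monomials $\lambda_1^{s_1}\cdots\lambda_m^{s_m}$ with $\sum s_i=n$, and then bounds $|\lambda_1^{s_1}\cdots\lambda_m^{s_m}-1|\le n\max_i|\lambda_i-1|$ by peeling off one unit-modulus factor at a time via $|ab-1|\le|a||b-1|+|a-1|$. You instead realize $\varphi(U)$ as $U^{\otimes n}$ restricted to the symmetric subspace and apply the telescoping identity $\tilde U^{\otimes n}-U^{\otimes n}=\sum_{k}\tilde U^{\otimes(k-1)}\otimes(\tilde U-U)\otimes U^{\otimes(n-k)}$ together with multiplicativity of the operator norm under tensor products. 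Your argument is more elementary and uses only $\|U\|_{\mathrm{op}},\|\tilde U\|_{\mathrm{op}}\le 1$ rather than full unitarity, so it extends verbatim to contractions; the paper's approach, in exchange, yields the explicit spectrum of $\varphi(M)$ as a byproduct and makes the role of the homomorphism property $\varphi(\tilde U)\varphi(U)^{-1}=\varphi(\tilde U U^{-1})$ transparent. The one place to be careful in your write-up is exactly the point you flag: the unitary identification of the Fock space with $\mathrm{Sym}^n(\mathbb{C}^m)$ must be made precise so that restriction genuinely does not increase the operator norm.
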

In order to prove this, it will be useful to have two lemmas. Lemma
\ref{lem:Operator distance in terms of eigenvalues} expresses the
operator distance between two unitary matrices $A$ and $B$ in terms
of the eigenvalues of $AB^{-1}$. Lemma \ref{lem:Eigenvalues of homomorphism}
relates the eigenvalues of $\varphi\left(M\right)$ to those of $M$.
\begin{lem}
\label{lem:Operator distance in terms of eigenvalues} If $A$ and
$B$ are unitary, their operator distance can be expressed in terms
of the eigenvalues $\left\{ \lambda_{i}\right\} $ of $AB^{-1}$ as
\[
\left\Vert A-B\right\Vert _{\mathrm{op}}=\max_{i}\left|\lambda_{i}-1\right|.\]
\end{lem}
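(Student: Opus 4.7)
The plan is to reduce the two-matrix quantity $\|A-B\|_{\mathrm{op}}$ to a single-matrix computation involving $AB^{-1}$, and then exploit the fact that $AB^{-1}$ is unitary, hence normal, so its operator norm is controlled by its eigenvalues.

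First I would write
\[
A - B = (AB^{-1} - I)\,B,
\]
and invoke unitary invariance of the operator norm: since $B$ is unitary, right-multiplication by $B$ is an isometry on the space of matrices under the operator norm, so
\[
\|A - B\|_{\mathrm{op}} = \|AB^{-1} - I\|_{\mathrm{op}}.
\]
This reduces the problem to computing the operator norm of $AB^{-1} - I$.

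Next, I would use that the product of two unitaries is unitary: $AB^{-1}$ is unitary, and in particular normal. By the spectral theorem for normal operators, there is a unitary $V$ such that $V^{*}(AB^{-1})V = \operatorname{diag}(\lambda_1, \dots, \lambda_k)$, with each $|\lambda_i|=1$. Then
\[
V^{*}(AB^{-1} - I)V = \operatorname{diag}(\lambda_1 - 1, \dots, \lambda_k - 1),
\]
which is again normal (in fact diagonal). For a diagonal (or more generally normal) matrix, the operator norm equals the spectral radius, namely the maximum modulus of an eigenvalue. Using unitary invariance of the operator norm once more,
\[
\|AB^{-1} - I\|_{\mathrm{op}} = \max_i |\lambda_i - 1|.
\]
Combining this with the first display yields the claimed formula.

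I do not anticipate any real obstacle: the argument is essentially a sequence of standard facts (unitary invariance of the operator norm, the spectral theorem for normal operators, and the identification of operator norm with spectral radius in the normal case). The only thing one has to be a little careful about is invoking normality of $AB^{-1}$ rather than Hermiticity, since $AB^{-1} - I$ is generally not Hermitian; but because $AB^{-1}$ is unitary it commutes with its adjoint, and so does $AB^{-1}-I$, which is what the spectral argument needs.
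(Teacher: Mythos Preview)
Your proof is correct and follows essentially the same route as the paper: reduce $\|A-B\|_{\mathrm{op}}$ to $\|AB^{-1}-I\|_{\mathrm{op}}$ by unitary invariance, unitarily diagonalize $AB^{-1}$, and read off the operator norm of the resulting diagonal matrix as $\max_i|\lambda_i-1|$. The paper's argument is slightly terser but identical in substance.
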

\begin{proof}
Since $AB^{-1}$ is unitary, it diagonalizes via unitaries as $AB^{-1}=V\mbox{diag\ensuremath{\left(\lambda_{i}\right)}\ensuremath{\ensuremath{V^{^{\ast}}}}}$.
Use the operator norm's invariance to left-multiplication or right-multiplication
by a unitary, we have

\begin{eqnarray*}
\left\Vert A-B\right\Vert _{\mathrm{op}} & = & \left\Vert AB^{-1}-I\right\Vert _{op}\\
 & = & \left\Vert V\left(\mathrm{diag}\left(\lambda_{i}\right)-I\right)V^{\ast}\right\Vert _{\mathrm{op}}\\
 & = & \left\Vert \mathrm{diag}\left(\lambda_{i}-1\right)\right\Vert _{\mathrm{op}}\\
 & = & \max_{i}\left|\lambda_{i}-1\right|.\end{eqnarray*}
\end{proof}
\begin{lem}
\label{lem:Eigenvalues of homomorphism} If $M$ has eigenvalues $\left(\lambda_{1},\dots,\lambda_{m}\right)$,
then the eigenvalues of $\varphi\left(M\right)$ are $\lambda_{1}^{s_{1}}\cdots\lambda_{m}^{s_{m}}$
for each ordered partition $S$ of $n$ into $m$ parts with sizes
$s_{1},\dots,s_{m}$.\end{lem}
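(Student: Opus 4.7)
The plan is to exploit the homomorphism property of $\varphi$ to reduce to the diagonal case, where the eigenvalues can be read off directly from the Fock basis. In the intended application $M$ is unitary (being $AB^{-1}$ for the unitaries of Lemma \ref{lem:Operator distance in terms of eigenvalues}), so I diagonalize $M = V D V^{-1}$ with $D = \mathrm{diag}(\lambda_1, \ldots, \lambda_m)$. Because $\varphi$ is a homomorphism, $\varphi(M) = \varphi(V)\varphi(D)\varphi(V)^{-1}$, so $\varphi(M)$ is similar to $\varphi(D)$ and shares its spectrum. This reduces the problem to computing the eigenvalues of $\varphi(D)$.

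Next, I would argue that each Fock basis state $|s_1, \ldots, s_m\rangle$, indexed by an ordered partition $(s_1, \ldots, s_m)$ of $n$ into $m$ non-negative parts, is an eigenvector of $\varphi(D)$ with eigenvalue $\lambda_1^{s_1}\cdots \lambda_m^{s_m}$. The intuition is that $D$ rescales each single photon in mode $i$ by $\lambda_i$, so an $n$-photon occupation state with $s_i$ photons in mode $i$ picks up a product with one factor of $\lambda_i$ per photon. To formalize this, I would either invoke the explicit permanent formula for $\varphi$ from \cite{key-1}---a permanent of a submatrix of a diagonal matrix vanishes unless the row and column multisets coincide, making $\varphi(D)$ manifestly diagonal with entry $\prod_i \lambda_i^{s_i}$ in row/column $S$---or work in the creation-operator formalism, writing $|s_1, \ldots, s_m\rangle \propto \prod_i (a_i^\dagger)^{s_i}|0\rangle$ and using that $\varphi(D)$ conjugates $a_i^\dagger$ to $\lambda_i a_i^\dagger$.

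A dimension count then closes the argument: the number of ordered partitions of $n$ into $m$ non-negative parts equals $\binom{m+n-1}{n} = N$, matching the dimension of the $n$-photon space, so the eigenvectors above already form a basis and every eigenvalue of $\varphi(D)$ has been exhibited.

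The main obstacle is bookkeeping rather than substance: matching the ordered-partition indexing of the Fock basis to the chosen definition of $\varphi$ and verifying that off-diagonal entries of $\varphi(D)$ genuinely vanish. Once diagonality is confirmed, the eigenvalue formula follows from the multiplicativity of $\varphi$ on commuting diagonal inputs, together with the sanity check that $\varphi$ applied to the global phase $e^{i\theta}I$ scales each $n$-photon state by $e^{in\theta} = \prod_i (e^{i\theta})^{s_i}$, consistent with $\sum_i s_i = n$.
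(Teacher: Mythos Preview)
Your proof is correct and is essentially the same argument as the paper's, though packaged differently. The paper works directly in the polynomial realization of the $n$-boson Fock space: for each eigenvector $v_i$ of $M$ it forms the linear form $v_i(x)$, then shows that the degree-$n$ product $p_S(x)=v_1(x)^{s_1}\cdots v_m(x)^{s_m}$ is a $\varphi(M)$-eigenvector with eigenvalue $\lambda_1^{s_1}\cdots\lambda_m^{s_m}$, and finishes with the same dimension count you give. Your route---diagonalize $M=VDV^{-1}$, invoke the homomorphism to obtain $\varphi(M)=\varphi(V)\varphi(D)\varphi(V)^{-1}$, and read off the spectrum of $\varphi(D)$ on standard Fock states---is the same computation viewed from the eigenbasis: $\varphi(V)$ carries the Fock state $\left|S\right\rangle$ precisely to the paper's $p_S$. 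Your version has the modest advantage of making the reduction to the diagonal case explicit and of offering the permanent-formula verification that $\varphi(D)$ is diagonal; the paper's version trades that for not having to name $V$ and working one eigenvector at a time. Both proofs implicitly assume $M$ is diagonalizable, which is fine since in the application $M=\tilde U U^{-1}$ is unitary.
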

\begin{proof}
Let $v_{i}$ be the eigenvector corresponding to $\lambda_{i}$. We
will construct eigenvectors of $\varphi\left(M\right)$ in terms of
the $v_{i}$ and note that they have the desired eigenvalues.

For each eigenvector $v_{i}$, let $v_{i}\left(x\right)$ be the formal
polynomial $\left(v_{i}\right)_{1}x_{1}+\cdots+\left(v_{i}\right)_{n}x_{n}$.
For each $S$, let $p_{S}$ be the degree-$n$ polynomial\[
p_{S}\left(x\right)=v_{1}^{s_{1}}\left(x\right)\dots\cdots v_{m}^{s_{m}}\left(x\right)\]
If we consider $\varphi\left(M\right)$ as it acts on the Fock basis,
we see that each $p_{S}\left(x\right)$ is an eigenvector with eigenvalue
$\lambda_{1}^{s_{1}}\cdots\lambda_{m}^{s_{m}}$: 

\begin{eqnarray*}
\varphi\left(M\right)\left(p_{S}\left(x\right)\right) & = & \left(Mv_{1}\right)^{s_{1}}\left(x\right)\cdots\left(Mv_{m}\right)^{s_{m}}\left(x\right)\\
 & = & \left(\lambda_{1}v\right)^{s_{1}}\left(x\right)\cdots\left(M\lambda_{m}v\right)^{s_{m}}\left(x\right)\\
 & = & \lambda_{1}^{s_{1}}\cdots\lambda_{m}^{s_{m}}\left(p_{S}\left(x\right)\right)\end{eqnarray*}
Since we have one eigenvalues for each $S$, the number of which equal
the dimension $\left(\binom{m}{n}\right)$ of $\varphi\left(M\right)$,
this is the full set of eigenvalues.
\end{proof}
Now, we're ready to prove Theorem \ref{thm:homomorphism expansion},
which we restate here.
\begin{thm}
Let $\varphi$ be the homomorphism that takes a $m\times m$ unitary
matrix $U$ acting on a single boson and produces a $N\times N$ unitary
matrix acting on $n$ identical bosons with $N=\left(\binom{m}{n}\right)$.
Then, \[
\left\Vert \varphi\left(\tilde{U}\right)-\varphi\left(U\right)\right\Vert _{\mathrm{op}}\leq n\left\Vert \tilde{U}-U\right\Vert _{\mathrm{op}}\]
\end{thm}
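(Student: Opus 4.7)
The plan is to reduce the operator-norm comparison on the $n$-boson side to an elementary inequality about products of unit complex numbers, using the two preceding lemmas together with the fact that $\varphi$ is a homomorphism.

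First, I would let $W = \tilde U U^{-1}$, which is unitary, and note that because $\varphi$ is a homomorphism one has $\varphi(\tilde U)\varphi(U)^{-1} = \varphi(W)$. Let $\lambda_1,\dots,\lambda_m$ denote the eigenvalues of $W$; these all lie on the unit circle. By Lemma \ref{lem:Operator distance in terms of eigenvalues} applied to the single-boson pair,
\[
\|\tilde U - U\|_{\mathrm{op}} = \max_i |\lambda_i - 1|,
\]
and by Lemma \ref{lem:Eigenvalues of homomorphism} the eigenvalues of $\varphi(W)$ are exactly the products $\lambda_1^{s_1}\cdots\lambda_m^{s_m}$ indexed by compositions $S=(s_1,\dots,s_m)$ of $n$. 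Applying Lemma \ref{lem:Operator distance in terms of eigenvalues} again, this time to the $n$-boson pair, gives
\[
\|\varphi(\tilde U) - \varphi(U)\|_{\mathrm{op}} = \max_S \bigl|\lambda_1^{s_1}\cdots\lambda_m^{s_m} - 1\bigr|.
\]

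The theorem therefore reduces to showing that for any unit complex numbers $\lambda_1,\dots,\lambda_m$ and any nonnegative integers $s_i$ with $\sum_i s_i = n$,
\[
\bigl|\lambda_1^{s_1}\cdots\lambda_m^{s_m} - 1\bigr| \;\le\; n\,\max_i |\lambda_i - 1|.
\]
This is the heart of the argument, and I would prove it by repeated use of the identity $|ab - 1| = |b(a-1) + (b-1)| \le |a-1| + |b-1|$, valid whenever $|b|=1$. Iterating this telescoping inequality across the $n$ unit-modulus factors $\lambda_1,\dots,\lambda_1,\lambda_2,\dots,\lambda_m$ (with $\lambda_i$ appearing $s_i$ times) yields
\[
\bigl|\lambda_1^{s_1}\cdots\lambda_m^{s_m} - 1\bigr| \;\le\; \sum_i s_i\,|\lambda_i - 1| \;\le\; \Bigl(\sum_i s_i\Bigr)\,\max_i |\lambda_i - 1| \;=\; n\,\max_i |\lambda_i - 1|.
\]
Combining with the two eigenvalue identifications above gives exactly the claimed bound.

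The main conceptual step is the reduction via the homomorphism property $\varphi(\tilde U)\varphi(U)^{-1}=\varphi(\tilde U U^{-1})$, which lets Lemma \ref{lem:Eigenvalues of homomorphism} be applied directly; after that, only the elementary inequality above is needed. I do not anticipate a genuine obstacle: the only subtlety is remembering that $|a-1|+|b-1|$ is an upper bound for $|ab-1|$ precisely because $|b|=1$, so the argument really does require the eigenvalues to lie on the unit circle — a property guaranteed by the unitarity of $W$.
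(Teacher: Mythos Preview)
Your proposal is correct and follows essentially the same approach as the paper: reduce to the eigenvalues of $\tilde U U^{-1}$ via the homomorphism property, invoke the two lemmas to identify both operator distances with $\max|\lambda-1|$ quantities, and then bound $|\lambda_1^{s_1}\cdots\lambda_m^{s_m}-1|$ by telescoping the inequality $|ab-1|\le|a-1|+|b-1|$ for unit-modulus factors. The paper writes the telescoping step as $|ab-1|\le|a|\,|b-1|+|a-1|$, but this is the same inequality once $|a|=1$.
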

\begin{proof}
Let $\left(\lambda_{1},\dots,\lambda_{m}\right)$ be the eigenvalues
of $\tilde{U}U^{-1}$. From Lemma \ref{lem:Eigenvalues of homomorphism},
the eigenvalues of $\varphi\left(\tilde{U}\right)\varphi\left(U\right)^{-1}$,
which equals $\varphi\left(\tilde{U}U^{-1}\right)$ because $\varphi$
is a homomorphism, are $\lambda_{1}^{s_{1}}\cdots\lambda_{m}^{s_{m}}$
for each ordered partition $S$ of $n$ into $m$ parts, which we
write as $\lambda^{S}$ for brevity. 

We now bound the distance of $\lambda^{S}$ from $1$ in terms of
the distances of the $\lambda_{i}$ from $1$. As eigenvalues of a
unitary matrix, the $\lambda_{i}$ are complex phases with norm $1$,
we can inductively apply \begin{eqnarray*}
\left|ab-1\right| & = & \left|ab-a+a-1\right|\\
 & \leq & \left|a\right|\left|b-1\right|+\left|a-1\right|\end{eqnarray*}
to get

\begin{equation}
\left|\lambda^{S}-1\right|\leq\sum_{i}s_{i}\left|\lambda_{i}-1\right|\leq n\max_{i}\left|\lambda_{i}-1\right|\label{eq:eigenvalues distance from 1 bound}\end{equation}
From Lemma \ref{lem:Operator distance in terms of eigenvalues}, we
have \[
\max_{i}\left|\lambda_{i}-1\right|=\left\Vert \tilde{U}-U\right\Vert _{\mathrm{op}}\]
and \[
\max_{S}\left|\lambda^{S}-1\right|=\left\Vert \varphi\left(\tilde{U}\right)-\varphi\left(U\right)\right\Vert _{\mathrm{op}},\]
so equation \ref{eq:eigenvalues distance from 1 bound} gives the
desired result \[
\left\Vert \varphi\left(\tilde{U}\right)-\varphi\left(U\right)\right\Vert _{\mathrm{op}}\leq n\left\Vert \tilde{U}-U\right\Vert _{\mathrm{op}}\]

\end{proof}

\subsection{Bounding distance between the output distributions}

In Section \ref{sec:Effects-of-the-homomorphism}, we showed that
$\tilde{U}$ being close to $U$ implies that the corresponding $n$-boson
transition matrices $\varphi\left(U\right)$ and $\varphi\left(\tilde{U}\right)$
are close. We now argue that applying close transition matrices to
the same input produces close measurement distributions.

Let $\Psi_{0}$ be the initial $n$-boson state. For BosonSampling,
this is a Fock basis state $\left|1_{n}\right\rangle $, but this
is not necesarry for this result. Applying unitaries $U$ and $\tilde{U}$
to $\Psi_{0}$ produces states that we call\begin{eqnarray*}
\Psi & = & \varphi\left(U\right)\Psi_{0}\\
\tilde{\Psi} & = & \varphi\left(\tilde{U}\right)\Psi_{0}\end{eqnarray*}
The distributions $\mathcal{D}_{U}$ and $\mathcal{D}_{\tilde{U}}$
are produced by measuring $\Psi$ and $\tilde{\Psi}$ respectively
in the standard basis.

We show that the distance between the distributions is bounded by
the operator distance between the respective operators that produced
them.
\begin{lem}
\label{lem:Operator distance effect}\textup{$\left\Vert \mathcal{D}_{\tilde{U}}-\mathcal{D}_{U}\right\Vert \leq\left\Vert \varphi\left(\tilde{U}\right)-\varphi\left(U\right)\right\Vert _{\mathrm{op}}$}\end{lem}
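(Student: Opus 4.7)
The plan is to chain two standard inequalities. First I would bound the $L_{1}$ distance between the output distributions $\mathcal{D}_U,\mathcal{D}_{\tilde U}$, obtained by measuring the pure states $\Psi=\varphi(U)\Psi_{0}$ and $\tilde\Psi=\varphi(\tilde U)\Psi_{0}$ in the Fock basis, by the Euclidean distance $\Vert\Psi-\tilde\Psi\Vert_{2}$ between the state vectors themselves. Second, since $\Psi_{0}$ is a unit vector, that Euclidean distance is in turn bounded by $\Vert\varphi(\tilde U)-\varphi(U)\Vert_{\mathrm{op}}$ directly from the definition of the operator norm. Chaining the two yields the lemma, and in particular the bosonic structure does not enter at all at this stage: it has already been packaged into Theorem \ref{thm:homomorphism expansion}.

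For the first step there are two natural routes. The direct combinatorial route is to write, for each Fock basis element $|i\rangle$, the difference of probabilities $|\langle i|\Psi\rangle|^{2}-|\langle i|\tilde\Psi\rangle|^{2}$ as a product of a sum and a difference $(|\langle i|\Psi\rangle|-|\langle i|\tilde\Psi\rangle|)(|\langle i|\Psi\rangle|+|\langle i|\tilde\Psi\rangle|)$, apply the reverse triangle inequality to the first factor to turn it into $|\langle i|(\Psi-\tilde\Psi)\rangle|$, and then use Cauchy--Schwarz after summing over $i$ (using that both $\Psi$ and $\tilde\Psi$ are unit vectors). The cleaner quantum route is to pass through the pure-state trace distance: the total variation of the measurement statistics in any fixed basis is at most the trace distance between the states, which for pure states equals $\sqrt{1-|\langle\Psi|\tilde\Psi\rangle|^{2}}$; an elementary manipulation using $1-|c|^{2}=(1-|c|)(1+|c|)\leq 2(1-\mathrm{Re}\,c)$ bounds this expression by $\Vert\Psi-\tilde\Psi\Vert_{2}$. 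Either route produces an inequality of the form $\Vert\mathcal{D}_{\tilde U}-\mathcal{D}_{U}\Vert_{1}\leq C\,\Vert\Psi-\tilde\Psi\Vert_{2}$ for a small absolute constant $C$.

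The only subtlety, and the one point I would have to be careful with, is pinning down the constant. The direct combinatorial argument above produces $C=2$, whereas the trace-distance route together with the convention $\Vert p-q\Vert_{1}=\tfrac{1}{2}\sum_{i}|p_{i}-q_{i}|$ for total variation gives exactly $C=1$, matching the statement of the lemma. Either way, this is pure bookkeeping rather than a genuine obstacle: there is no further algebraic or combinatorial content beyond the two standard inequalities, because all of the photon-counting combinatorics has been discharged by Theorem \ref{thm:homomorphism expansion}, leaving behind a generic quantum-measurement lemma.
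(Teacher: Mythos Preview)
Your proposal is correct and, in its trace-distance route, matches the paper's proof essentially step for step: bound $\Vert\tilde\Psi-\Psi\Vert$ by $\Vert\varphi(\tilde U)-\varphi(U)\Vert_{\mathrm{op}}$ via the operator-norm definition, then bound the variation distance by the pure-state trace distance $\sqrt{1-|\langle\tilde\Psi\mid\Psi\rangle|^{2}}$ and reduce the latter to $\Vert\tilde\Psi-\Psi\Vert$ using $\mathrm{Re}\,\langle\tilde\Psi\mid\Psi\rangle=1-\tfrac{1}{2}\Vert\tilde\Psi-\Psi\Vert^{2}$. Your observation about the constant is also right: the paper is implicitly using the total-variation convention, which is what makes $C=1$ come out.
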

\begin{proof}
We first bound the Euclidian distance of the resulting states from
the definition of the operator norm \begin{eqnarray}
\left\Vert \tilde{\Psi}-\Psi\right\Vert  & = & \left\Vert \left(\varphi\left(U\right)-\varphi\left(\tilde{U}\right)\right)\Psi_{0}\right\Vert \nonumber \\
 & \leq & \left\Vert \varphi\left(\tilde{U}\right)-\varphi\left(U\right)\right\Vert _{\mathrm{op}}\left\Vert \Psi_{0}\right\Vert \nonumber \\
 & = & \left\Vert \varphi\left(\tilde{U}\right)-\varphi\left(U\right)\right\Vert _{\mathrm{op}}\label{eq:state-dist-norm}\end{eqnarray}
Now, we show that variation distance between $\mathcal{D}_{U}$ and
$\mathcal{D}_{\tilde{U}}$ is bounded by this distance$\left\Vert \tilde{\Psi}-\Psi\right\Vert $. 

The variation distance $\left\Vert \mathcal{D}_{\tilde{U}}-\mathcal{D}_{U}\right\Vert _{1}$
corresponding to the distributions obtained from a standard basis
measurement is bounded by the trace distance, the maximum such variation
over all projective measurements.\[
\left\Vert \mathcal{D}_{\tilde{U}}-\mathcal{D}_{U}\right\Vert _{1}\leq\left\Vert \tilde{\Psi}-\Psi\right\Vert _{\mathrm{tr}}\]
We use the expression for trace distance between pure states and bound
this expression in terms of $\left\Vert \tilde{\Psi}-\Psi\right\Vert .$\begin{eqnarray*}
\left\Vert \mathcal{D}_{\tilde{U}}-\mathcal{D}_{U}\right\Vert _{1} & \leq & \left\Vert \tilde{\Psi}-\Psi\right\Vert _{\mathrm{tr}}\\
 & = & \sqrt{1-\left|\left\langle \tilde{\Psi}\mid\Psi\right\rangle \right|^{2}}\\
 & \leq & \sqrt{1-\left(\mathrm{Re}\left\langle \tilde{\Psi}\mid\Psi\right\rangle \right)^{2}}\\
 & = & \sqrt{1-\left(1-\frac{1}{2}\left\Vert \tilde{\Psi}-\Psi\right\Vert \right)^{2}}\\
 & \leq & \left\Vert \tilde{\Psi}-\Psi\right\Vert \end{eqnarray*}

\end{proof}
Combining this with Equation \ref{eq:state-dist-norm} gives the bound
\[
\left\Vert \mathcal{D}_{\tilde{U}}-\mathcal{D}_{U}\right\Vert _{1}\leq\left\Vert \varphi\left(\tilde{U}\right)-\varphi\left(U\right)\right\Vert _{\mathrm{op}},\]
which, along with Theorem \ref{thm:homomorphism expansion} \[
\left\Vert \varphi\left(\tilde{U}\right)-\varphi\left(U\right)\right\Vert _{\mathrm{op}}\leq n\left\Vert \tilde{U}-U\right\Vert _{\mathrm{op}}\]
gives the main result.

\subsection{Error tolerance of components of the linear optical network\label{sub:Components}}

We now investigate the maximum error on components of the linear optical
network that still guarantees that the a output distribution is vanishingly
close to the ideal one. This requires bounding the error of the unitary
produced by a linear optical network in terms of that of its components.
\begin{prop}
If each component $\tilde{A}$ of a linear optical network is within
operator distance $\epsilon$ of the ideal component $A$ \[
\left\Vert \tilde{A}-A\right\Vert _{\mathrm{op}}\leq\epsilon,\]
then the produced unitary $U$ acting on the first $n$ modes has
accuracy

\[
\left\Vert \tilde{U}-U\right\Vert _{\mathrm{op}}=O\left(n\epsilon\log m\right)\]
and the measured output has
\end{prop}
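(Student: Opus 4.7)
The plan is to treat the overall network unitary as an ordered product of the $m \times m$ unitaries implemented by each beamsplitter and phaseshifter (each extended by the identity on the modes it does not touch) and then bound the total noise by a standard telescoping argument. By Theorem 45 of \cite{key-1}, the network realizing $U$ on the first $n$ modes can be chosen with $K = O(n \log m)$ total components, so I write $U = A_K A_{K-1} \cdots A_1$ and $\tilde{U} = \tilde{A}_K \tilde{A}_{K-1} \cdots \tilde{A}_1$. A preliminary observation is that padding a local $k$-mode component $A$ into the full $m$-mode space as $A \otimes I$ does not change its operator distance to the ideal component, since $\|(A - \tilde{A}) \otimes I\|_{\mathrm{op}} = \|A - \tilde{A}\|_{\mathrm{op}}$. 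This lets me apply the per-component hypothesis $\|\tilde{A}_i - A_i\|_{\mathrm{op}} \leq \epsilon$ directly to the embedded $m \times m$ matrices.

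The central step will be the standard telescoping identity
\[
\tilde{A}_K \cdots \tilde{A}_1 - A_K \cdots A_1 = \sum_{i=1}^{K} \tilde{A}_K \cdots \tilde{A}_{i+1} (\tilde{A}_i - A_i) A_{i-1} \cdots A_1.
\]
Since each $A_j$ and each $\tilde{A}_j$ is unitary and hence has operator norm $1$, submultiplicativity of the operator norm together with the triangle inequality immediately yields
\[
\|\tilde{U} - U\|_{\mathrm{op}} \leq \sum_{i=1}^{K} \|\tilde{A}_i - A_i\|_{\mathrm{op}} \leq K \epsilon = O(n \epsilon \log m),
\]
which is the first half of the statement. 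For the second half (the bound on the measured output, whose LaTeX is truncated in the excerpt but which should assert the corresponding variation-distance estimate), I will simply feed the above into the main theorem to obtain $\|\mathcal{D}_{\tilde{U}} - \mathcal{D}_U\|_1 \leq n \|\tilde{U} - U\|_{\mathrm{op}} = O(n^2 \epsilon \log m)$, from which the $\epsilon = o(1/(n^2 \log m))$ per-component tolerance of Corollary 3 drops out.

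There is essentially no obstacle here beyond bookkeeping. The one conceptual point to flag is the tensor-with-identity observation above, which is what lets the per-component error hypothesis be applied cleanly without incurring any dependence on the ambient mode count $m$ at the component level. Once that is noted, the rest of the proof uses nothing beyond unitarity of the factors, submultiplicativity of the operator norm, and the triangle inequality, and the main theorem does all the remaining work to pass from $\tilde{U} - U$ to the output distribution.
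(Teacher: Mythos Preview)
Your telescoping argument and the invocation of the main theorem are both fine; the gap is in the component count. Theorem~45 of \cite{key-1} does \emph{not} say the network can be built with $K = O(n\log m)$ total beamsplitters and phaseshifters: it says the network uses $O(mn)$ such elements arranged in \emph{depth} $O(n\log m)$. If you telescope over all individual components as you propose, you get $K = O(mn)$ terms and hence only $\|\tilde U - U\|_{\mathrm{op}} = O(mn\,\epsilon)$, which is too weak to yield the stated bound.

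What is missing is precisely the ``parallel'' observation the paper uses: the components within a single layer act on disjoint sets of modes, so the layer unitary is block-diagonal, $L_i = \bigoplus_j A_{i,j}$ (up to a permutation of modes), and therefore
\[
\|\tilde L_i - L_i\|_{\mathrm{op}} \;=\; \max_j \|\tilde A_{i,j} - A_{i,j}\|_{\mathrm{op}} \;\le\; \epsilon.
\]
In other words, errors of components acting in parallel do not add. Once you apply your telescoping identity to the $O(n\log m)$ \emph{layers} $L_i$ rather than to the $O(mn)$ individual elements, the claimed $O(n\epsilon\log m)$ bound follows. Your $A\otimes I$ remark is the right instinct but only covers one component per layer; you need its direct-sum analogue to collapse an entire layer to error $\epsilon$.
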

\[
\left\Vert \mathcal{D}_{\tilde{U}}-\mathcal{D}_{U}\right\Vert _{1}=O\left(n^{2}\epsilon\log m\right).\]

\begin{proof}
We wish to bound the operator distance error of the network in terms
of that of its components. We use two familiar facts about operator
distance: 
\begin{itemize}
\item For components are applied in parallel, the overall operator distance
error is at most that of each component, So, if each component has
some maximum error, so does each layer in the network. 
\item For components applied in series, the total operator distance error
is at most the sum of the operator distance error of the components.
\end{itemize}
A linear optical network for $n$ fixed input modes and $m$ output
modes can be implemented using $O\left(mn\right)$ beamsplitters and
phaseshifters in a network of depth $O\left(n\log m\right)$ (Theorem
45 of \cite{key-1}). So, if each optimal element is within operator
norm $\epsilon$ of the ideal, we are guaranteed the following accuracy
for a linear optical network

\[
\left\Vert \tilde{U}-U\right\Vert _{op}=O\left(n\log m\right)\left\Vert \tilde{A}-A\right\Vert _{\mathrm{op}}=O\left(n\epsilon\log m\right)\]
Applying the main theorem then gives an overall error of 
\end{proof}
\[
\left\Vert \mathcal{D}_{\tilde{U}}-\mathcal{D}_{U}\right\Vert _{1}=O\left(n^{2}\epsilon\log m\right).\]

\begin{cor}
\label{cor:Component accuracy}In order to have $\left\Vert \mathcal{D}_{\tilde{U}}-\mathcal{D}_{U}\right\Vert _{1}=o\left(1\right)$,
it suffices for every component in the network to have an accuracy
of $\left\Vert \tilde{A}-A\right\Vert _{\mathrm{op}}=o\left(\frac{1}{n^{2}\log m}\right)$.
\end{cor}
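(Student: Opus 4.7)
The plan is to obtain this corollary as a direct asymptotic consequence of the preceding proposition, which I may assume. That proposition establishes the chain
\[
\bigl\|\tilde{U}-U\bigr\|_{\mathrm{op}} = O(n\epsilon \log m), \qquad \bigl\|\mathcal{D}_{\tilde{U}}-\mathcal{D}_{U}\bigr\|_{1} = O(n^{2}\epsilon \log m),
\]
whenever each component error is bounded by $\epsilon$. So the only task is to choose $\epsilon$ small enough that the right-hand side of the second bound is $o(1)$.

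First I would set $\epsilon = \bigl\|\tilde{A}-A\bigr\|_{\mathrm{op}}$ as the hypothesized component accuracy, and plug the assumption $\epsilon = o(1/(n^{2}\log m))$ into the proposition. This yields
\[
\bigl\|\mathcal{D}_{\tilde{U}}-\mathcal{D}_{U}\bigr\|_{1} = O\!\left(n^{2} \log m \cdot o\!\left(\tfrac{1}{n^{2}\log m}\right)\right) = o(1),
\]
which is exactly the conclusion claimed. Note that the $O(\cdot)$ constant from the proposition is absorbed into the little-$o$ on the right, since $o(1)$ is closed under multiplication by constants.

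There is no substantive obstacle here: the corollary is essentially just the contrapositive arithmetic of solving $n^{2}\epsilon\log m = o(1)$ for $\epsilon$. The only thing worth being careful about is that the proposition's $O(n^{2}\epsilon\log m)$ hides absolute constants coming from the $O(n\log m)$ depth bound of Theorem 45 of \cite{key-1} and from the main theorem's factor of $n$; these constants are independent of $n$ and $m$, so they do not spoil the $o(1)$ conclusion when combined with the assumed $o(1/(n^{2}\log m))$ per-component accuracy.
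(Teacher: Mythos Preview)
Your proposal is correct and matches the paper's approach exactly: the paper states this corollary immediately after the proposition with no separate proof, treating it as the direct consequence of substituting $\epsilon = o\!\left(\frac{1}{n^{2}\log m}\right)$ into the bound $\left\Vert \mathcal{D}_{\tilde{U}}-\mathcal{D}_{U}\right\Vert _{1}=O\left(n^{2}\epsilon\log m\right)$. Your write-up is, if anything, more explicit than the paper about why the hidden constants are harmless.
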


\section{Interpretation}

Note that we do not obtain that a constant error suffices. In fact,
constant error does not suffice, as shown in \cite{Kalai Kindler,Leverrier Patron},
suggesting that fault-tolerance is necessary to perform scalable quantum
computing. This is not surprising -- we expect that more photons require
higher accuracy for the unitary because each photon interaction with
the unitary introduces error. Similarly, as the network requires more
and more components, each component must have better accuracy to maintain
the same overall accuracy. 

We conjecture that the requirement we obtain that $\left\Vert \tilde{U}-U\right\Vert _{op}=o\left(\frac{1}{n}\right)$
is the best possible. It parallels the Berstein-Vazirani result for
qubit-based circuits \cite{BV}. Because each photon passes through
the network and experiences its imperfections, it is natural to conjecture
that the acceptable error in the network falls inversely with the
number of photons. Likewise, since each photon passes through a depth
$O\left(n\log m\right)$-network, one might has guessed that the acceptable
error of each component is $O\left(\frac{1}{n\log m}\right)$ time
that of the full network, as corresponding to the sufficient bound
in \ref{cor:Component accuracy}.

\subsection{Future work}

This work solely addresses one type of noise: errors in the beamsplitters
and phaseshifters that cause them to implement a slightly erronnous
unitary matrix. We would like to extend these results to other sources
of noise. The more plausible potential extensions of this approach
are those dealing with continuous errors rather than discrete ones
like photon losses. One such source is the partial distinguishability
of the photons as they pass through the network, a phenomenon that
has been mathematically modelled by Tichy \cite{Tichy} and Xu \cite{Xu}.

The gaps between the sufficient bound proven here and the necessarry
bounds proven in \cite{Leverrier Patron,Kalai Kindler} mean that
an improvement must be possible to at least one of the sides. Moreover,
all the results are fine-tuned for models of noise, so it would be
ideal to bound the error under each of the noise models.

\section{appendix A: Translation Noise Models\label{sec:appendix-A}}

Previous work on BosonSampling noise \cite{Leverrier Patron,Kalai Kindler}
used different measures of error than we did. In order to put these
results on the same scale as ours, we will find the amount of operator
distance error that corresponds to the errors they prove prohibitive.
Note that because these results are optimized for their specific model
of error, the converted results are not necessarily the strongest
possible.

The work of Leverrier and Patrón \cite{Leverrier Patron} demonstrates
that each linear optical element must have fidelity $1-O\left(1/n^{2}\right)$.
This corresponds to operator distance $O\left(1/n^{2}\right)$ for
each element. From the observation in Section \ref{sub:Components}
that the operator distance of the whole network is at most it depth
times that of each component, and the result that $O\left(n\log m\right)$
depth suffices (Theorem 45 of \cite{key-1}), this corresponds to
necessary error $O\left(\log m/n\right)$.

The work of Kalai and Kindler \cite{Kalai Kindler} argues that a
noise level of additive $\epsilon=\omega\left(1/n\right)$ Gaussian
error is prohobitive for Boson Sampling. We show that this corresponds
to operator distance \[
\left\Vert \tilde{U}-U\right\Vert _{op}=\omega(1/\sqrt{n})\]
so that we may put it on the same scale as out result.

Consider an $\epsilon$-noise of a matrix $X$. In order to match
with operator distance, we consider $X$ to be the entire $m\times m$
unitary matrix, rather than an $n\times n$ submatrix, since we expect
the error to affect entries in the whole matrix just as it does the
submatrix. Since each entry of a unitary matrix has a norm of $1/\sqrt{m}$
in RMS average, the error should be $\epsilon/\sqrt{m}$.

So, an $\epsilon$-noise of a unitary matrix $U$ is given by 

\[
\tilde{U}=\sqrt{1-\epsilon}U+\sqrt{\epsilon}G/\sqrt{m},\]
where $G$ is a matrix of i.i.d. complex Gaussians. To first order
in $\epsilon$, the difference $\tilde{U}-U$ is given by 

\[
\tilde{U}-U=-\epsilon U/2+\sqrt{\epsilon}G/\sqrt{m}+O(\epsilon^{2})\]
Since $U$ and $G/\sqrt{m}$ have entries of the same RMS-norm, for
small $\epsilon,$ the term with coefficient $\sqrt{\epsilon}$ dominates
the remaining terms:

\[
\tilde{U}-U=\sqrt{\epsilon}G/\sqrt{m}+O(\epsilon)\]
Then, the phohibitive amount of noise $\epsilon=\omega(1/n)$ corresponds
to 

\[
\tilde{U}-U=\omega(1/\sqrt{m})G/\sqrt{m}\]
Finally, with the result from \cite{Random norms} that a random $m\times m$
Gaussian matrix has operator norm $\Theta\left(\sqrt{m}\right)$ with
high probability, $\left\Vert G/\sqrt{m}\right\Vert _{op}=\Theta\left(1\right)$,
and so the corresponding operator distance is \[
\left\Vert \tilde{U}-U\right\Vert _{op}=\omega(1/\sqrt{n})\]

\end{document}